\theoremstyle{plain}
\newtheorem{theorem}{Theorem}[section]
\theoremstyle{definition}
\theoremstyle{remark}
\begin{document}

\articletype{ARTICLE TEMPLATE}

\title{A correlation structure for the analysis of Gaussian and non-Gaussian responses in crossover experimental designs with repeated measures}

\author{
\name{N.A. Cruz\textsuperscript{a}\thanks{N.A. Cruz . Email: neacruzgu@unal.edu.co, Corresponding author}, O.O. Melo \textsuperscript{b} and C.A. Martinez \textsuperscript{c}}
\affil{\textsuperscript{a}Phd Student, Department of  Statistics, 
             Faculty of Sciences,
           Universidad Nacional de Colombia, Phone:(+571)\;3165000 Ext: 13206,\; Fax:(+571)\;3165000\; Ext: 13210 ; \textsuperscript{b}oomelom@unal.edu.co, Associate Professor, Department of  Statistics, 
             Faculty of Sciences,
            Universidad Nacional de Colombia; \textsuperscript{c}cmartinez@agrosavia.co,  Phd, Associate Researcher,
 Corporación Colombiana de Investigación Agropecuaria – AGROSAVIA, Sede Central}
}

\maketitle

\begin{abstract}
  In this study, we propose a family of correlation structures for crossover designs with repeated measures for both, Gaussian and non-Gaussian responses using generalized
estimating equations (GEE). The structure considers two matrices: one that models between-period correlation and another one that models within-period correlation. The overall correlation matrix, which is used to build the GEE, corresponds to the Kronecker between these matrices. A procedure to estimate the parameters of the correlation matrix is proposed, its statistical properties are studied and a comparison with standard models using a single correlation matrix is carried out. A simulation study showed a superior performance of the proposed structure in terms of the quasi-likelihood criterion, efficiency, and the capacity to explain complex correlation phenomena/patterns in longitudinal data from crossover designs
\end{abstract}

\begin{keywords}
Carry-over effect; Generalized Estimating Equations; Kronecker correlation; Overdispersed Count Data
\end{keywords}
\section{Introduction}
Experimental designs are a very useful tool to analyze the effects of treatments applied to a set of experimental units \citep{hk}. In this kind of studies, it is frequent that experimental units are observed at a unique time point, this is known as a transversal experiment; notwithstanding, sometimes experimental units are observed several times during the study, keeping the same treatment, giving rise to longitudinal studies \citep{Davis}. 
There are situations in which an experimental unit receives all treatments, each one in a different period, this induces a setup where a sequence of treatments is applied to each unit. This kind of designs is known as crossover design \citep{ken15}.
In the scope of crossover designs, published results focus on the case of a normally distributed   \citep{ken15} or binary (two possible outputs, namely success or failure) response variable. The latter case has been treated by means of generalized linear models for binary data (\cite{rat92}, \cite{curtin2017meta} and \cite{li2018power}).\\
\citet[pag 204]{ken15} described a crossover experiment with three treatments to control arterial pressure: treatment A is a placebo, treatments B and C are 20 and 40 mg doses of a test drug. Thus, there were six three-period sequences: ABC, ACB, BCA, BAC, CAB, and CBA, each one of them was applied to two individuals. In each period, 10 consecutive measurements of diastolic arterial pressure were taken: 30 and 15 minutes before, and 15, 30, 45, 60, 75, 90, 120 and 240 minutes after the administration of the treatment, as shown in Table \ref{tabla1}. 
\begin{table}[ht]
    \centering
   \begin{tabular}{c|c|c|c|}
        Sequence & Period 1 & Period 2 & Period 3\\
        \hline
        \begin{tabular}{cc}
            (1) ABC & Ind 1\\
             & Ind 2
        \end{tabular} & \begin{tabular}{c}
               10 measurements\\
              10 measurements
        \end{tabular}&  \begin{tabular}{c}
              10 measurements\\
              10 measurements
        \end{tabular}&  \begin{tabular}{c}
              10 measurements\\
             10 measurements
        \end{tabular}\\
        \vdots & \vdots &\vdots &\vdots \\
         \begin{tabular}{cc}
            (6) CBA & Ind 11\\
             & Ind 12
        \end{tabular} & \begin{tabular}{c}
               10 measurements\\
              10 measurements
        \end{tabular}&  \begin{tabular}{c}
              10 measurements\\
              10 measurements
        \end{tabular}&  \begin{tabular}{c}
              10 measurements\\
             10 measurements
        \end{tabular}\\
        \hline
    \end{tabular}
    \caption{Structure of the blood pressure crossover design}
    \label{tabla1}
\end{table}
. A second experiment carried out by researchers of the Department of Animal Sciences of Universidad Nacional de Colombia, was focused on inferring the effects of two diets, A (grass) and B (a mixture of grass and paper recycling waste) on dairy cattle performance. Eight cows split into two groups of four received the diets in such a way that the first group was fed diet A from day 1 to day 42, and diet B from day 43 to day 84, while the second group was fed diet B during the first period and diet A during the second one. Measurements of milk yield and quality, live weight and body condition score were taken at 1, 14, 28, 42, 56, 70 and 84 days \citep{jaime}; the design structure is shown in Table\ref{tablavacas}.

\begin{table}[ht]
    \centering
   \begin{tabular}{c|c|c|}
        Sequence & Period 1 & Period 2\\
        \hline
        \begin{tabular}{cc}
            (1) AB & Ind 1\\
            &\vdots\\
             & Ind 4
        \end{tabular} & \begin{tabular}{c}
               3 measurements\\
               \vdots \\
              3 measurements
        \end{tabular}&  \begin{tabular}{c}
             3 measurements\\
               \vdots \\
              3 measurements
        \end{tabular}\\
         \begin{tabular}{cc}
             (1) BA & Ind 5\\
            &\vdots\\
             & Ind 8
        \end{tabular} & \begin{tabular}{c}
                3 measurements\\
               \vdots \\
              3 measurements
        \end{tabular}&  \begin{tabular}{c}
              3 measurements\\
               \vdots \\
              3 measurements
        \end{tabular}\\
        \hline
    \end{tabular}
    \caption{Structure of the crossover design in cows}
    \label{tablavacas}
\end{table}

To analyze this sort of designs, \cite{basu2010joint}, \cite{josephy2015within}, \cite{hao2015explicit}, \cite{lui2015test}, \cite{rosenkranz2015analysis}, \cite{grayling2018blinded}, \cite{madeyski2018effect} and \cite{kitchenham2018corrections} used mixed models for crossover designs with Gaussian response and a single observation per period including additive carryover effects. \cite{biabani2018crossover} presented a review on crossover designs in neuroscience, all papers considered Gaussian responses and did not account for carryover effects due to the presence of a washout period, even when it lasted a few days. On the other hand,  \cite{oh2003bayesian},  \cite{curtin2017meta} and   \cite{li2018power} used generalized linear models for crossover designs of two periods and two sequences of two treatments with a continuous (normal or gamma) or binary response variable and each experimental unit observed once per period. They used generalized estimating equations (GEE) to estimate model parameters. A Bayesian formulation of generalized linear models where the response was the survival time with a single observation per period was presented in  \cite{shkedy2005hierarchical} and \cite{liubayesian}.

Moreover, \cite{dubois2011model}, \cite{diaz2013random} and \cite{for15} used Gaussian mixed models to analyze records from crossover designs with repeated measures using the area under curve as a strategy to obtain a single observation per period and one experimental unit, and they did not account for carryover effects.

In all aforementioned approaches, it is assumed that the crossover design considers a washout period between treatments and that the response variable of each individual is observed once per period. These assumptions are not fulfilled in the experiments described above because of two reasons: i) in patients with arterial hypertension the treatment cannot be stopped, while in the case of dairy cattle, cows must be fed every day; moreover, in both studies, the placebo is part of the treatment design, so considering it as a washout period would modify the experiment, ii) in each treatment, experimental units were observed several times per period. Therefore, there is a necessity for developing a consistent methodology to analyze this sort of experiments. 
In this study, we develop a methodology to analyze data from crossover designs with repeated measures using GEE and considering two correlation structures, between and within periods, which are combined via a Kronecker product to yield the overall correlation matrix. These approach was found to improve the estimation of parameters of interest with respect to methods based on GEE that consider a single correlation structure for all the observations of a subject; in addition, this method accounts for carryover effects, hence, it does not need a washout period.

In the second section, we present some background and define the methodology, in the third section we propose a method to estimate the correlation matrix, and provide theoretical results that support our estimation and modelling approaches. In the fourth section, we present a simulation study showing some advantages of our model as compared to the model with a single correlation structure for each period. Lastly, in the fifth section, the methodology is applied to real data from the aforementioned arterial pressure and dairy cattle experiments. 
 
\section{Crossover designs with repeated measures}
A crossover design has the following components \citep{pat51}: i) \textit{Sequences} which are each one of the distinct treatment combinations to be sequentially and randomly applied to the experimental units,  ii) \textit{Treatments}, which are randomly applied to each experimental unit within each sequence, iii) \textit{Periods}, the time intervals in which the treatments that make up a sequence are applied, usually, each period is the same for all sequences and, consequently, the number of periods is equal to the number of elements of each sequence,  iv) \textit{Experimental unit}, the subjects or elements that receive the treatments, in each sequence there are  $n_l$ experimental units, so the total number of experimental units in the study is denoted by  $n=\sum_{l=1}^S n_l$.

Another important feature of the crossover design is the existence of carryover effects, defined in  \cite{vegas2016crossover} as follows: persistency of the effect of a treatment over those that will be applied later, i.e., the treatment is applied in a given period, but its effect still affects the response in later periods when other treatments are applied, this residual effect is known as carryover effect. When the effect persists after one period, it is known as first order carryover effect, and when it persists after two periods, it is called a second order effect and so on \citep{pat51}. 

 In a crossover design with $S$ sequences of length (the number of elements comprising each sequence)  $P$, $Y_{ijk}$ is defined as the  $k$th respose of the $i$th experimental unit at $j$th period. Let $n_{ij}$ be the number of observations of the $i$th experimental unit during the  $j$th period. Then, vector $\pmb{Y}_{ij}$ is defined as: 
 \begin{equation}\label{observacion}
  \pmb{Y}_{ij}=\left(Y_{ij1}, \ldots, Y_{ijn_{ij}} \right)^T
 \end{equation}
Also, vector $\pmb{Y}_{i}$ is defined as:
 \begin{equation}
\pmb{Y}_{i}= \left(\pmb{Y}_{i1}, \ldots, \pmb{Y}_{iP}\right)^T
 \end{equation}
which has dimension $\sum_{j=1}^P n_{ij}$, where $P$ is the number of periods.\\
According to these definitions, the ideas presented when discussing the arterial pressure and dairy cattle experiments, and assuming that  $Y_{ijk}$ has a distribution in the exponential family, we propose the following model based on GEE:
\begin{align}
       E(Y_{ijk})&=\mu_{ijk},\qquad i=1, \ldots, n, \; j=1,\ldots, P, \;  k=1, \ldots, n_{ij}, \; L=\max_{ij}\{n_{ij}\} \nonumber\\
   g(\mu_{ijk})&=\pmb{x}^T_{ijk}\pmb{\beta}=\mu +\gamma_j + \tau_{d[i,j]}+ \theta_{d[i,j-u]}+\cdots+\theta_{d[i,j-q]} \label{predic_eta}
\end{align}
where $g(\cdot)$ is the link function related to the exponential family, $\pmb{x}_{ijk}$ is the vector of the design matrix corresponding to $k$th response from the $i$th experimental unit at the  $j$th period, $\pmb{\beta}$ is the vectors of fixed effects, $\mu$ is the overall mean, $\alpha_i$  is the effect of the $i$th sequence, $\gamma_j$ is the effect of the $j$th period, $ \tau_{d[i,j]}$ is the effect of treatment $d$ applied in the period $j$ to $i$th experimental unit ($d=1, \ldots, q$),  $\theta_{(1)[i,j-1]}$ is the first order carryover effect of $d$ treatment,  $\theta_{d[i,j-u]}$  is the carry over effect of order $u$ of $d$ treatment. The carryover effects are considered in the two experiments discussed above because there was not a washout period. Due to the fact that observations of the same experimental unit are correlated, parameter estimation is carried out using GEE \citep{liang1986longitudinal}. To this end, the following system of $q=dim(\pmb{\beta})$ equations has to be solved:
 
\begin{align}
\pmb{U}(\pmb{\beta})=&\left[ \left\{\sum_{i=1}^{n_i} \pmb{x}_{mi}^T \pmb{D}\left( \frac{\partial \pmb{\mu}_i}{\partial\eta}\right) [\pmb{V}(\pmb{\mu}_i)]^{-1}\left(\frac{\pmb{y}_{i}- \pmb{\mu}_i}{a(\phi)} \right) \right\}_{m=1, \ldots, q} \right]_{q\times 1}\label{ec101}
\end{align}
where $\pmb{\mu}_i=(\mu_{i11}, \ldots, \mu_{iPn_{iP}})^T$, $\pmb{y}_{i}=(y_{i11}, \ldots, y_{iPn_{iP}})^T$, $\pmb{x}_{mi}$ is the $m$th column of the design matrix of the ith experimental unit, and the covariance component  $\pmb{V}(\pmb{\mu}_i)$ is defined as:
\begin{equation}\label{ec104}
\pmb{V}(\pmb{\mu}_i)=\left[\pmb{D}(V(\mu_{ijk})^{\frac{1}{2}}) \pmb{R}(\pmb{\pmb{ \alpha}}) \pmb{D}({V}(\mu_{ijk})^{\frac{1}{2}})\right]_{P \times P} 
\end{equation}
where $\pmb{D}(\cdot)$ is a diagonal matrix,  $V(\mu_{ijk})$ is the variance function corresponding to the exponential family, and  $R(\pmb{ \alpha})$ is the correlation matrix related to the covariance matrix $\pmb{\Sigma}_{i}=Var(\pmb{Y}_{i})$ as follows:
\begin{align*}
    \left( \pmb{\Sigma}_{i}\right)_{LP \times LP}&= \left( \pmb{D}(Var(Y_{ijk})^{\frac{1}{2}}) \pmb{R}(\pmb{\alpha}) \pmb{D}(Var(Y_{ijk})^{\frac{1}{2}})\right)_{LP \times LP}.
 \end{align*}
 
\section{Kronecker correlation matrix}
Due to the repeated measures structure in the crossover design, we propose a correlation structure of the form:
\begin{equation}\label{ec105}
    \pmb{R}(\pmb{\alpha})=\pmb{\Psi}\otimes\pmb{R}_1(\pmb{\alpha}_1)
\end{equation}
where $\pmb{R}_1(\pmb{\alpha}_1)$  is the within-period correlation matrix, $\pmb{\Psi}$ is the between-period correlation matrix, and $\otimes$ represents the Kronecker product \citep{harville}.
To estimate this matrix, we propose the following modified GEE to estimate $\pmb{\beta}$ is:
\begin{align}\nonumber
\pmb{U}_1(\pmb{\beta})=&\left[ \left\{ \sum_{i=1}^{n} \pmb{x}_{mi}^T \pmb{D}\left( \frac{\partial \pmb{\mu}_i}{\partial\eta}\right) [\pmb{V}(\pmb{\mu}_i)]^{-1}\left(\frac{\pmb{y}_{i}- \pmb{\mu}_i}{a(\phi)} \right) \right\}_{m=1, \ldots, p} \right]_{p\times 1}\label{ec101}
\end{align}
where $\pmb{V}(\pmb{\mu}_i)$ is defined as in Equation \eqref{ec104}, $\pmb{\mu}_i=\{\mu_{i11}, \ldots, \mu_{iPL}\}$, $\pmb{x}_{mi}$ is the $m$th column of the design matrix of the $i$th experimental unit.\\
The estimating equation for  $\pmb{\alpha}_1$ is:
\begin{equation}\label{106}
\pmb{U}_2(\pmb{ \alpha}_1)=\sum_{i=1}^n \left(\frac{\partial \pmb{\varepsilon}_{i}}{\partial \pmb{ \alpha}_1} \right)^T \pmb{H}_{i}^{-1} \left(\pmb{W}_{i} - \pmb{\varepsilon}_{i} \right)
\end{equation}
where $\pmb{H}_{i}=\pmb{D}(V(r_{ijk}))_{q\times q}$ is a diagonal matrix, $\pmb{\varepsilon}_{i}=E(\pmb{W}_{i})_{q\times 1}$,\\ 
$\pmb{W}_{i}= (r_{i11}r_{i12},$ $r_{i11}r_{i13},\ldots,r_{iP(L-1)}r_{iPL})^T_{q\times 1}$, and $r_{ijk}$ is the  $ijk$th observed Pearson residual defined as:
\begin{equation}\label{ecPearson_obs}
    r_{ijk}=\frac{Y_{ijk}-\hat{\mu}_{ijk}}{\hat{\phi}\sqrt{V(\hat{\mu}_{ijk})})}
\end{equation}
and $q={P\choose 2}$.
On the other hand, we propose the following estimator for $\pmb{\Psi}=\left(\psi_{jj'}\right)_{P\times P}$:
\begin{equation}\label{107A}
\hat{\psi}_{jj'}=\frac{1}{n} \sum_{i=1}^n tr\left(\pmb{R}_1(\pmb{\hat{\alpha}}_1)(\pmb{r}_{(j)i}-\bar{\pmb{r}}_{(j)})(\pmb{r}_{(j')i}-\bar{\pmb{r}}_{(j')})^T\right)
\end{equation}
where $\pmb{r}_{(j)i}$ is the vector of Pearson residuals for the $i$th experminetal unit at the $j$th  period, and $\bar{\pmb{r}}_{(j)}$ is the average (over i),  $j\neq j'$, with $j= 1, \ldots, P$\\
\begin{theorem}\label{teorema2}
The estimator of $\pmb{R}(\pmb{\alpha})$ given by $\pmb{\hat{\Psi}}\otimes\pmb{R}_1(\pmb{\hat{\alpha}}_1)$ is asymptotically unbiased and consistent. 
\end{theorem}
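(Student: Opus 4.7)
The plan is to decompose the Kronecker estimator and handle its two factors separately. Since the map $(A,B)\mapsto A\otimes B$ is jointly continuous, it is enough to show (i) $\pmb{R}_1(\hat{\pmb{\alpha}}_1)\stackrel{P}{\to}\pmb{R}_1(\pmb{\alpha}_1)$, (ii) $\hat{\pmb{\Psi}}\stackrel{P}{\to}\pmb{\Psi}$ entrywise, and (iii) the analogous convergence of expectations; the continuous mapping theorem combined with Slutsky's theorem then assembles these into the desired statement for $\hat{\pmb{\Psi}}\otimes\pmb{R}_1(\hat{\pmb{\alpha}}_1)$. Throughout, I keep $P$ and $L$ fixed and let the number of subjects $n\to\infty$.

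For (i) I would invoke the standard GEE asymptotics of \citet{liang1986longitudinal}. The primary estimating function $\pmb{U}_1(\pmb{\beta})$ is unbiased at $\pmb{\beta}_0$ because $E[\pmb{y}_i-\pmb{\mu}_i]=0$, and the auxiliary score $\pmb{U}_2(\pmb{\alpha}_1)$ in \eqref{106} is unbiased at $\pmb{\alpha}_{1,0}$ by construction, since $\pmb{\varepsilon}_i=E[\pmb{W}_i]$ is built from the second moments of Pearson residuals. Under the usual regularity (identifiable parametrization of $\pmb{R}_1$, non-singular expected Jacobian, bounded fourth moments, independence across experimental units), a first-order Taylor expansion of each estimating equation about the true parameter yields $\sqrt{n}$-consistency of both $\hat{\pmb{\beta}}$ and $\hat{\pmb{\alpha}}_1$. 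Continuity of $\pmb{\alpha}_1\mapsto\pmb{R}_1(\pmb{\alpha}_1)$ then delivers (i) via the continuous mapping theorem.

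For (ii) I rewrite
\begin{equation*}
\hat{\psi}_{jj'}=\text{tr}\!\left(\pmb{R}_1(\hat{\pmb{\alpha}}_1)\,\hat{S}_{jj'}\right),\qquad \hat{S}_{jj'}=\frac{1}{n}\sum_{i=1}^n(\pmb{r}_{(j)i}-\bar{\pmb{r}}_{(j)})(\pmb{r}_{(j')i}-\bar{\pmb{r}}_{(j')})^T.
\end{equation*}
Under the posited Kronecker covariance the true Pearson residuals satisfy $\text{Cov}(\pmb{r}_{(j)i},\pmb{r}_{(j')i})=\psi_{jj'}\pmb{R}_1(\pmb{\alpha}_{1,0})$, so the weak law of large numbers applied to the $n$ independent subject-level outer products gives $\hat{S}_{jj'}\stackrel{P}{\to}\psi_{jj'}\pmb{R}_1(\pmb{\alpha}_{1,0})$ when the residuals are evaluated at $\pmb{\beta}_0$; the centering by $\bar{\pmb{r}}_{(j)}$ contributes only $O_p(n^{-1})$ to the trace and is negligible. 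Combining this with step (i) through Slutsky's theorem and the linearity of the trace yields consistency of $\hat{\psi}_{jj'}$ (up to the normalization implicit in \eqref{107A}). Dominated convergence applied to the same identity produces the asymptotic unbiasedness $E[\hat{\psi}_{jj'}]\to\psi_{jj'}$.

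The main obstacle I anticipate is justifying the replacement of the true Pearson residuals $r_{ijk}(\pmb{\beta}_0)$ by the plug-in versions $r_{ijk}(\hat{\pmb{\beta}})$ inside both $\hat{S}_{jj'}$ and the moment equations defining $\hat{\pmb{\alpha}}_1$. The clean argument uses a first-order expansion $r_{ijk}(\hat{\pmb{\beta}})=r_{ijk}(\pmb{\beta}_0)+\nabla_{\pmb{\beta}}r_{ijk}(\pmb{\beta}^*)^T(\hat{\pmb{\beta}}-\pmb{\beta}_0)$ combined with the $\sqrt{n}$-rate of $\hat{\pmb{\beta}}$ and uniform moment bounds on the derivatives of $g^{-1}(\cdot)$ and $V(\cdot)^{1/2}$; the resulting remainder is $O_p(n^{-1/2})$ term-by-term and $O_p(n^{-1})$ after averaging, which is asymptotically negligible. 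Once this nuisance-parameter substitution is controlled, the consistency and asymptotic unbiasedness of each factor propagate to the Kronecker product, completing the proof.
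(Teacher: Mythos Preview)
Your proposal is correct in outline but follows a genuinely different route from the paper's own proof. You rely on a direct decomposition: Liang--Zeger consistency for $\pmb{R}_1(\hat{\pmb{\alpha}}_1)$, a weak law of large numbers on the subject-level outer products $\hat S_{jj'}$, a Taylor expansion to swap $r_{ijk}(\pmb{\beta}_0)$ for $r_{ijk}(\hat{\pmb{\beta}})$, and then the continuous mapping and Slutsky theorems to assemble the Kronecker product. The paper instead invests most of its effort in the Pearson residuals themselves: it computes explicit first-order expansions of $E(R_{ijk})$ and $Var(R_{ijk})$ following \citet{cox1968general} and \citet{cordeiro1991bias}, establishes their asymptotic normality via \citet{cordeiro2004Pearson}, and then applies a Helmert-matrix rotation (in the style of \citet{srivastava2008models}) to transform the stacked residual matrix $\pmb{R}=(\pmb{r}_1,\ldots,\pmb{r}_n)$ into blocks $\pmb{Z}_1,\ldots,\pmb{Z}_{n-1}$ that are mean-zero, mutually uncorrelated, and share the target Kronecker covariance $\pmb{\Psi}\otimes\pmb{R}_1(\pmb{\alpha}_1)$. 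From those moment identities the paper reads off $E[\pmb{z}_{(j)k}\pmb{z}_{(j')k}^T]=\psi_{jj'}\pmb{R}_1(\pmb{\alpha}_1)$ and concludes.

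What each buys: your argument is considerably shorter and uses only standard asymptotic machinery, but it leans on unstated regularity (uniform integrability for the dominated-convergence step giving asymptotic unbiasedness, and moment bounds on the derivatives needed for the plug-in Taylor remainder). The paper's Helmert/Cox--Cordeiro route is heavier, but it makes the $O(n^{-1})$ bias of the Pearson residuals explicit and thereby handles the $\hat{\pmb{\beta}}$-substitution issue constructively rather than by appeal to generic $\sqrt{n}$-rates; it also delivers asymptotic normality of the residual blocks as a by-product. One small caveat: your limit $\hat\psi_{jj'}\to\operatorname{tr}\{\pmb{R}_1(\pmb{\alpha}_{1,0})\cdot\psi_{jj'}\pmb{R}_1(\pmb{\alpha}_{1,0})\}$ equals $\psi_{jj'}\operatorname{tr}\{\pmb{R}_1^2\}$, not $\psi_{jj'}$, so the ``normalization implicit in \eqref{107A}'' you allude to is not actually visible in the estimator as written; the paper's derivation has the analogous scaling issue (its key identity uses $\pmb{R}_1^{\pm 1/2}$ sandwiching), so this appears to be a notational looseness in the definition of $\hat\psi_{jj'}$ rather than a flaw in your logic.
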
 
\begin{proof}
See appendix \ref{ApenAA}
\end{proof}
The form of  $\pmb{R}_1(\pmb{\alpha}_1)$ in (\ref{ec105}) is given by correlation structures such as: independence, autoregressive, exchangeable, etc \citep{Davis}.\\

The correlation structures are compared via the quasi-likelihood information criterion ($QIC$, \cite{panq}), which is defined as:
\begin{equation}\label{QIC}
    QIC=-2QL(\hat{\mu};\pmb{I})+2trace(\pmb{\hat{\Omega}}^{-1}_I\pmb{\hat{V}_R})
\end{equation}
where $\hat{\mu}_{ijk}=\hat{\eta}_{ijk}=g^{-1}(\pmb{x}_{ijk}\hat{\beta})$ is the estimated expected value of observation $Y_{ijk}$  under the model assuming the correlation matrix $R$, $\hat{\Omega}_I$ is the estimated covariance matrix of $\pmb{\beta}$ under independence, and  $\hat{V}_R$ is the covariance matrix of  $\pmb{\beta}$ under the model with correlation matrix   $\pmb{R}$ defined as in \eqref{ec105}.
\section{Simulation Study} 
\begin{figure}[h!]
 \centering 
 \scalebox{0.45} 
 {\includegraphics{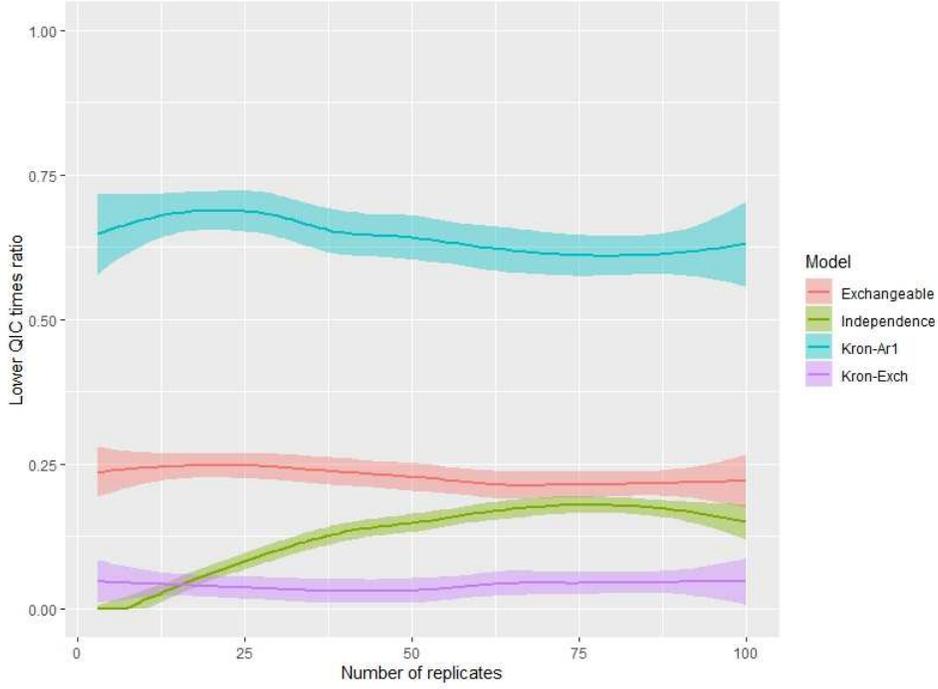}}
 \caption{Mean and corresponding 95\% confidence interval (computed from the 100 simulations) for the proportion of the number of times that the $QIC$ selected each model according to the number of replicates per sequence.}
 \label{fig3} 
\end{figure}
A simulation study was carried out to evaluate the proposed model. The setup was a crossover model with three periods  $(P=3)$, five repetitions  $(L=5)$ and two sequences $(S=2)$ within each period. The number of replicates per sequence  $(n)$ varied from 2 to 100,  and for each replicate, the simulation was ran 100 times. The following parameters are used in the simulation:
\begin{align}
   & P_k \sim N(0, 1), \; k=1,2,3, \nonumber \\
   & T_j \sim N(0, 1),\; j=1,2,3,4,5,\nonumber\\
   &S_{jk} \sim N(0,1),\nonumber\\
   &\Psi=\{\psi_{ab}\}_{3\times 3},\; \psi_{ab}\sim U(-1,1) ,\; \Psi\geq 0,\nonumber\\
   &\pmb{R}_1(\alpha_1), \;r_{ab}=\alpha_1^{|a-b|},\; \alpha_1\sim U(-1,1),\; \pmb{R}_1(\alpha_1)\geq 0,\label{ecsim}\\
   &  \mu_{ijk}=\alpha + P_k+T_j+S_{(ij)}, \; i=1, \ldots, 3n,\nonumber\\
    & Y_{ijk}\sim N(\mu_{ijk}, \sigma^2),\nonumber\\
    & \pmb{Y}_i= \left(\pmb{Y}_{i1}, \ldots, \pmb{Y}_{iP}\right)^T,\nonumber\\
    & Corr(\pmb{Y}_i)=\pmb{\Psi} \otimes \pmb{R}_1(\alpha_1)\nonumber
\end{align}
Four GEE models were fit per replicate, all had the same linear predictor and differed in the correlation structure: 1) independence, 2) first order autoregressive, 3) exchangeable, and 4) the Kronecker matrix proposed in this paper. Both, the simulations and fitted model were carried out using the libraries \textbf{gee} \citep{gee} and \textbf{geeM} \citep{geeM} of R \citep{RRR}

\begin{figure}[h!]
 \centering 
 \scalebox{0.7} 
 {\includegraphics{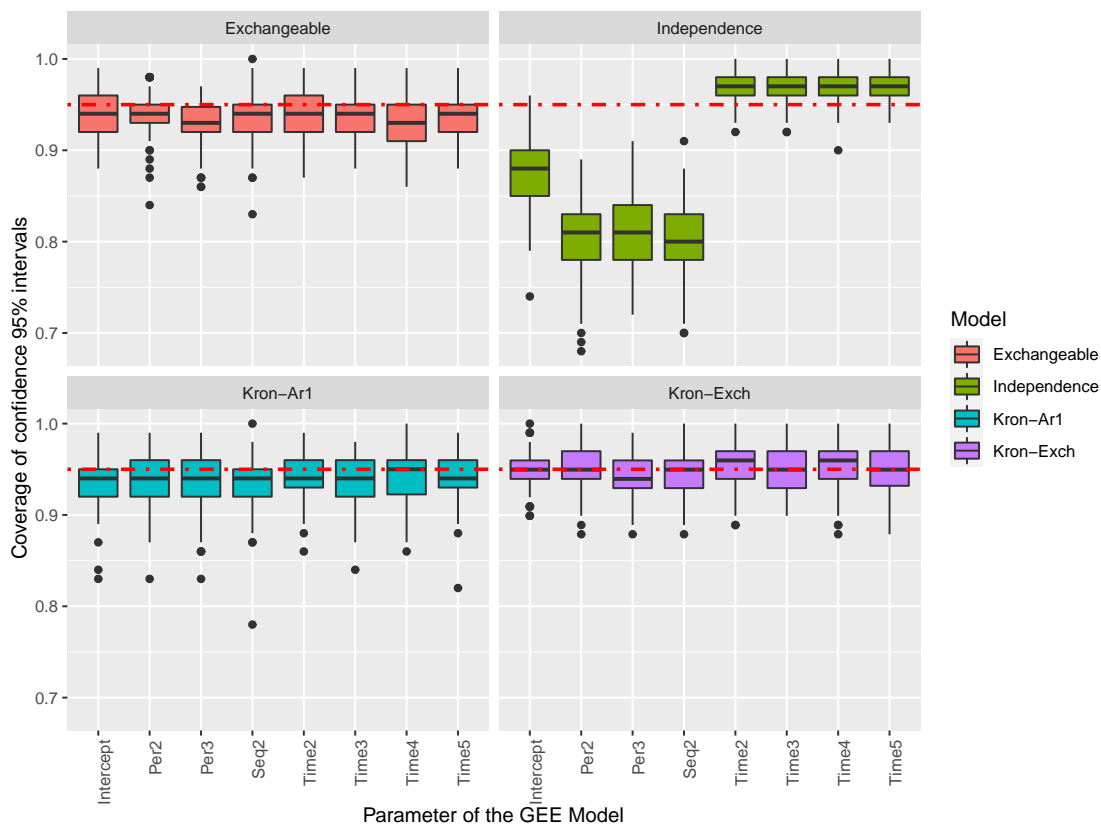}} 
 \caption{Box plots of 95\% confidence intervals coverage for each location parameter of the GEE models with different correlation structures. The red line represents the expected coverage.}
 \label{fig4} 
\end{figure}

Figure \ref{fig3} shows the proportion of times that each model had the smallest QIC relative to the 100 simulations for each number of replications per sequence (the mean and 95\% confidence interval are shown). Moreover, an analysis of parameter estimation under each one of these models using the simulated data was performed; Figure 2 presents 95\% confidence intervals for the location parameters under each model. In the simulation study it is observed that most of the times the lowest $QIC$ is obtained for the model with the true correlation structure. This suggets that the proposed estimation method along with theorem \ref{teorema2}, which will allow to detect kronecker correlation structures in the data of the crossover design. In addition, this behavior is maintained even when the number of replicas within the design is low.\\
Regarding the coverage of confidence intervals for each parameter, in Figure \ref{fig4} it can be noticed that the model with independence structure has a very low coverage for period and sequence effects; this will not allow a correct inference about the parameters of interest in the design. The other three models had more appropriate coverages, but the model with the within-period xchangeable matrix (that is, the true structure) exhibited values closer to 95\%.
As to the time effects, corresponding coverages were close to 95\% in all four models; however, a subcoverage was observed for the model with a within-period exchangeable structure and an overcoverage for the independence model. The true model showed coverages corresponding to 95\%. 
\section{Application}
\begin{figure}[ht]
\centering 
\includegraphics[width=15cm]{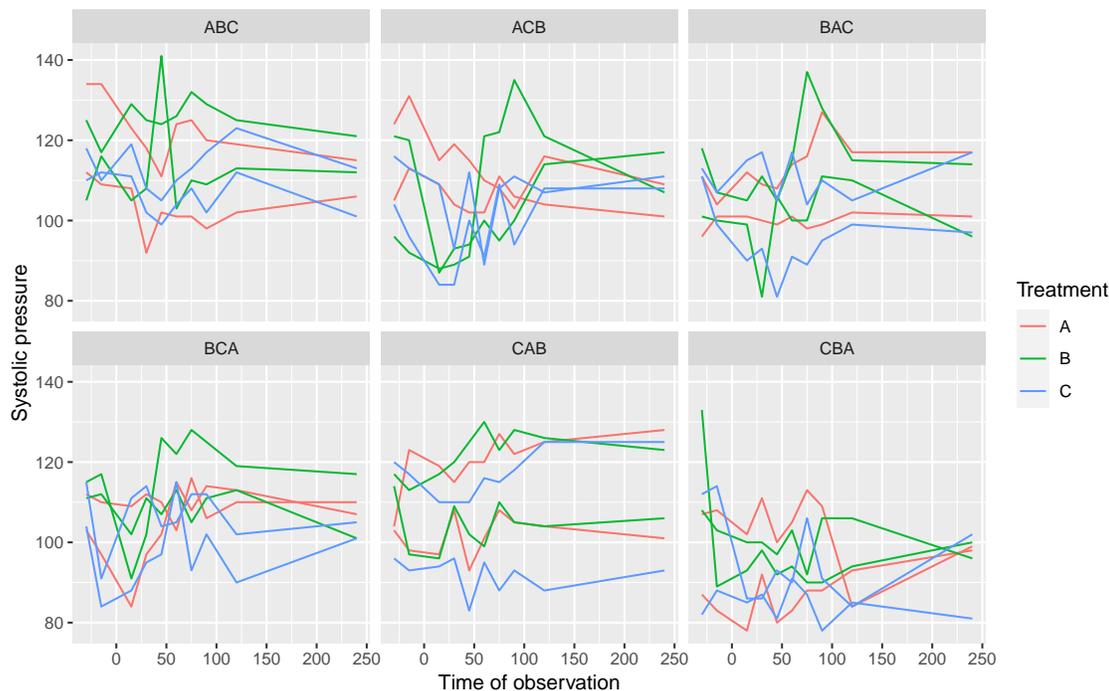}
\caption{Blood systolic pressure (mmHg), the time of observation is presented in minutes from the application}
\label{datosreales}
\end{figure}
The systolic blood pressure data is presented in Figure \ref{datosreales}. The model used to analyze this experiment had a linear predictor considering fixed effects of treatment, period, baseline (the two measurements taken before applying the treatment), first and second order linear and quadratic carryover effects as a function of time. On the other hand, the model use to analyze data from the second experiment had a linear predictor considering fixed effects of baseline, treatment, period, and first order linear carryover effects as a function of time (quadratic effects were not considered because there were three observations within each period).  The structures of the working correlation matrices were: i) independence, ii) first order autoregressive, and iii) fitted individually and were compared via the $QIC$ to determine the one exhibiting the best fit to each dataset. Table \ref{tabla2} presents the $QIC$ for each correlation structure in both datasets. 
\begin{table}[ht]
    \centering
   \captionsetup{justification=centering,margin=2cm}
   \begin{tabular}{|c|c|c|}\hline
   Matriz $R(\alpha)$ & $QIC$ presión arterial& $QIC$ vacas\\
   \hline
   $\pmb{I}_{PL}$ & 46086& 200.65\\
   \hline
   $\pmb{I}_{P}\otimes AR(1)_{L}$ & 44166& 139.70 \\
   \hline
   $\pmb{I}_{P}\otimes Exch_{L}$ & 45059 &138.85 \\
   \hline
    $AR(1)_{PL}$ & 44923 & 143.85\\
    \hline
    $Exch_{PL}$ &  44158 & 138.27 \\
   \hline
   $\pmb{\Psi}\otimes AR(1)_{L}$ & 43400 & 180.34 \\
   \hline
   $\pmb{\Psi}\otimes Exch_{L}$ & 43393 & 138.22\\
   \hline
    \end{tabular}
    \caption{Correlation matrices and the corresponding $QIC$ values}
    \label{tabla2}
\end{table}

\begin{figure}[ht]
 \centering 
 \scalebox{0.7} 
 {\includegraphics{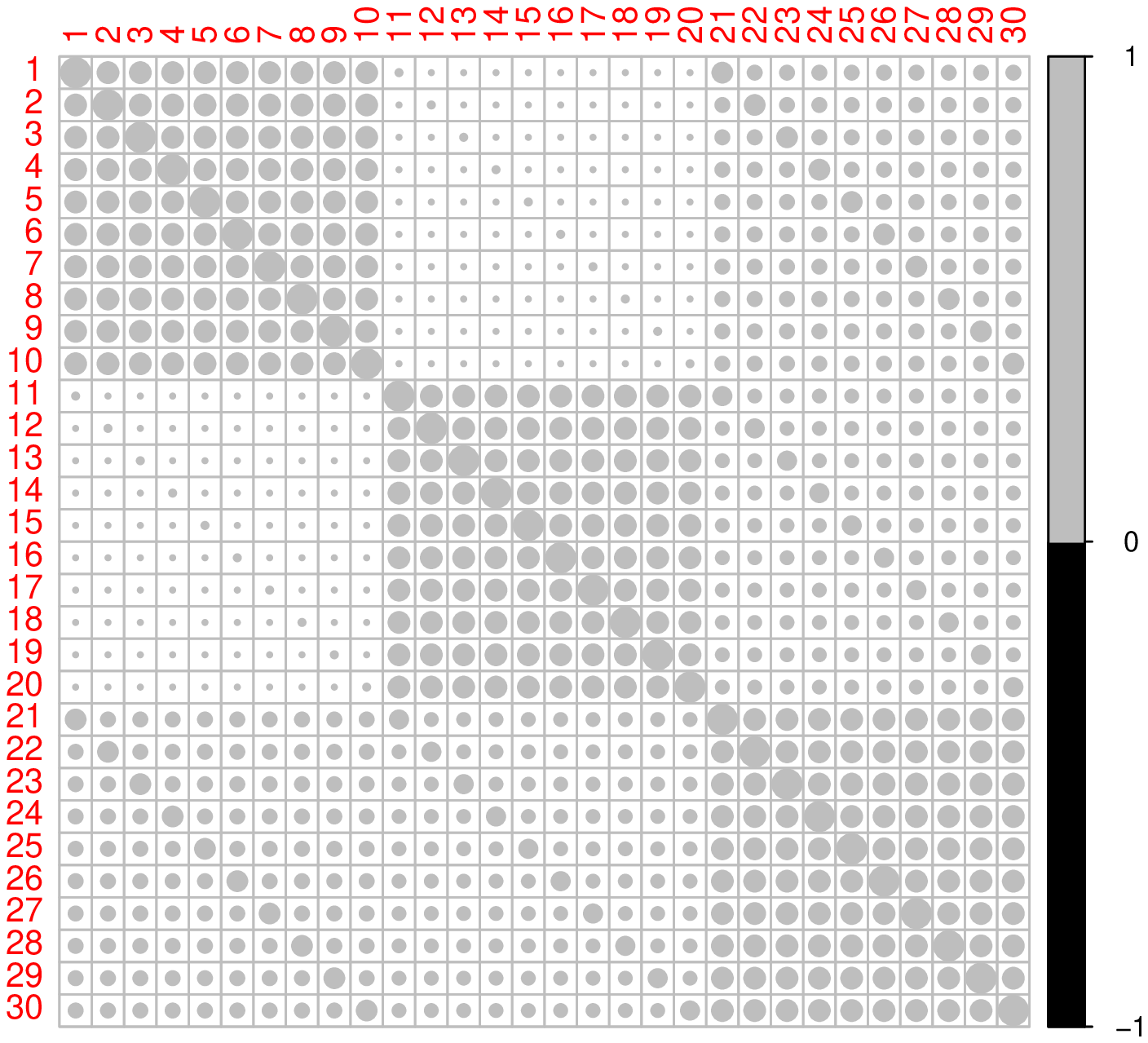}} 
 \caption{Estimated correlation matrix for the arterial pressure data with the structure $\pmb{\Psi}\otimes Exch_{10}$}
 \label{fig1} 
\end{figure}
\begin{figure}[ht]
 \centering 
 \scalebox{0.6} 
 {\includegraphics{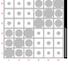}} 
 \caption{Estimated correlation matrix for the dairy cattle data with the structure  $\pmb{\Psi}\otimes Exch_{3}$}
 \label{fig11} 
\end{figure}
\begin{table}[ht]
\centering
\begin{tabular}{|l|cccc|}
  \hline
  Parameter &  Estimate  & Robust SE & Robust $z$ & $p$-value\\ 
  \hline
 Intercept &106.2917  & 3.4327 & 30.9641 & $<$0.001  \\  
 Base & 8.9587  & 1.9515 & 4.5908&$<$0.001\\  
 Time & 1.0754  & 0.4209 & 2.5548 &0.0106 \\
 Period 2 & 5.9752  & 3.0859 & 1.9363 & 0.0528\\  
 Period 3 &10.4117  & 5.1565 & 2.0191 & 0.0435 \\ 
 Treatment B &-1.3032  & 2.0353 & -0.6403 &0.5220 \\ 
 Treatment C &  -10.8840  & 3.4096 & -3.1922 & 0.0014 \\ 
 Carry-over B &  -7.2221  & 4.0008 & -1.8052 & 0.0710\\ 
 Carry-over C & -12.1513  & 5.6162 & -2.1636 &0.0305 \\ 
 Time$^2$ & -0.0526  & 0.0202 & -2.6091 & 0.0091\\  
 Time$\times$ Carry over B & 0.9191  & 0.7102 & 1.2940& 0.1957\\  
 Time$\times$ Carry over C &  -0.1007  & 0.5789 & -0.1739 &0.8619\\ 
 Time$^2\times$ Carry over B & -0.0459  & 0.0405 & -1.1325 &0.2574\\ 
 Time$^2\times$ Carry over C &   0.0209 & 0.0410 & 0.5097 &0.6103\\
   \hline
\end{tabular}
\caption{Estimates obtained by GEE for arterial pressure data }
\label{tabla3}
\end{table}
\begin{table}[ht]
\centering
\begin{tabular}{|l|cccc|}
  \hline
  Parameter &  Estimate  & Robust SE & Robust $z$ & $p$-value\\ 
  \hline
 Intercept &19.5505  & 1.0248 & 10.2950 & $<$0.001  \\
 Base & 0.5850  & 0.0629 & 9.2961&$<$0.001\\  
 Time & -0.6293 & 0.0949 & -6.6321 &$<$0.001\\
 Period 2 & 2.3908 & 1.0395 & 2.3000 & 0.0214 \\  
  Treatment A &1.1217 & 0.8341 & 1.3448 & 0.1787\\ 
 Carry-over A &   -4.2480 & 1.4128 & -3.0069 & 0.0026\\ 
 Time$^2$ &  0.0117 & 0.0018 & 6.4550 &$<$0.001 \\  
 Time$\times$ Carry over A &-0.0702 & 0.0260 & -2.6949 & 0.0070 \\ 
   \hline
\end{tabular}
\caption{Estimates obtained by GEE for dairy cattle data }
\label{tabla33}
\end{table}
According to \ref{tabla2} the $\pmb{\Psi}\otimes Exch_{10}$ and $\pmb{\Psi}\otimes Exch_{3}$ correlation matrices had the smallest QIC values for the arterial pressure and dairy cattle experiments, respectively. Hence, correlation matrices having the Kronecker structure had the best fit in the two datasets. The correlation matrices estimated using equations \eqref{106} and \eqref{107A} for the arterial pressure data are, respectively:
\begin{align}
    \hat{\pmb{\Psi}}&=\begin{pmatrix}
1.0000 & 0.0537 & 0.4486 \\ 
 0.0537 & 1.0000 & 0.3756 \\ 
0.4486 & 0.3756 & 1.0000 \\ 
        \end{pmatrix} \label{sii2}\\
   \pmb{\hat{R}}_1(\pmb{\alpha}) &=\begin{pmatrix}
 1.0000 & 0.4958 & 0.4958 & \cdots& 0.4958 \\ 
  0.4958 & 1.0000 & 0.4958 &\cdots & 0.4958 \\ 
 \vdots & \vdots & \vdots & \ddots & \vdots \\ 
  0.4958 & 0.4958 & 0.4958 & \cdots& 1.0000 \\
        \end{pmatrix}\label{siii3}
\end{align}
and for the dairy cattle dataset:
\begin{align}
    \hat{\pmb{\Psi}}&=\begin{pmatrix}
 1.0000 & 0.1073 \\ 
 0.1073 & 1.0000 \\ 
        \end{pmatrix} \label{sii22}\\
   \pmb{\hat{R}}_1(\pmb{\alpha}) &= \begin{pmatrix}
1.0000 & 0.5610 & 0.5610 \\ 
0.5610 & 1.0000 & 0.5610 \\ 
0.5610 & 0.5610 & 1.0000 \\
        \end{pmatrix}\label{siii32}
\end{align}
Figure \ref{fig1} shows the matrix computed as the Kronecker product of matrices in equations  \eqref{sii2} and \eqref{siii3}. Notice the positive correlation between periods 1 and 3, a small positive correlation between periods 1 and 2. On the other hand, the matrix obtained as the Kronecker product of matrices in equations  \eqref{sii22} and \eqref{siii32}  is shown in Figure  \ref{fig11} where a positive but small correlation between periods 1 and 2 can be seen. \\

The matrices with the Kronecker structure are used to estimate the location parameters of the linear model (since these had the smallest QIC) yielding Table \ref{tabla3} 4 for the arterial pressure data and Table \ref{tabla33}  for the dairy cattle data. Each table shows the estimates, their standard errors (computed using the “sandwich” variance estimator \citep{Hardin}, the z statistic and the p-value corresponding to the null hypothesis $\beta=0$

Table \ref{tabla3} shows significant effects of baseline, treatment C (with respect to A), time (linear and quadratic), period, and carryover effect of treatment C. The interaction between carryover effects and time was not significant, which means that the effect of the previous treatment remains the same during the rest of the experiment. Moreover, Table \ref{tabla33} shows significant effects of baseline (milk yield before the beginning of the experiment), linear (negative) and quadratic (positive) regression coefficients of time, which means that milk yield shows a convex behavior during the experiment. There was no significant treatment effect, but there was a significant carryover effect of the paper-based diet over the grass-based diet, i.e., the cows being fed paper took a time to recover after changing to the grass-based diet.    

\section{Conclusions}

Defining the correlation matrix structure is a highly relevant decision when using GEE, a proper specification of correlation structures in longitudinal data analysis improves estimation efficiency, leading to more reliable statistical inferences \citep{hin2009working} . Hence, in this paper we develop a family of correlation structures that combine some of the classical structures used in longitudinal data analysis with an additional matrix that allows more flexibility in the overall correlation structure by adding a few parameters. Moreover, we provide an explicit estimation method of these parameters which features some sound statistical properties.  

The theoretical results supporting some asymptotic properties of the proposed estimators are illustrated through the simulation study where a gain in goodness of fit was observed across the different simulation scenarios.  The QIC was able to select the correct model, which had the proposed correlation structure. In addition, the confidence intervals built from the GEE showed a coverage that matched the nominal value. The estimation by intervals for each of the parameters presents coverage close to 95\%, which shows a correct theoretical specification of each univariate interval.

As to the real data analysis, the results for the arterial pressure data showed the importance of accounting for carryover effects as they are useful for correctly estimating and interpreting the treatment affects across the time. If not included in the model, these residual effects may induce confusion problems. On the other hand, in the dairy cattle data, significant carryover effects were detected as well. Regarding the estimates of correlation matrices in both datasets, the QIC selected the proposed correlation matrix. 
Thus, the theoretical and empirical results from real and simulated data analyses suggest that the proposed methodology is promising and may be applied to perform better inferences from data obtained under crossover designs without washout periods and a repeated measures structure. 

\newpage
\bibliography{references}
\begin{appendix}
\section{}\label{ApenAA}

\begin{theorem}
The estimator of $\pmb{R}(\pmb{\alpha})$ given by $\pmb{\hat{\Psi}}\otimes\pmb{R}_1(\pmb{\hat{\alpha}}_1)$ is asymptotically unbiased and consistent. 
\end{theorem}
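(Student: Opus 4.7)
The plan is to factor the estimator as a product and handle each factor separately, then glue by continuity of the Kronecker product. Concretely, I would (i) invoke the standard consistency of the GEE estimator $\hat{\pmb{\beta}}$ obtained from $\pmb{U}_1$, so that the Pearson residuals $r_{ijk}$ converge in probability to their population counterparts; (ii) deduce consistency of $\hat{\pmb{\alpha}}_1$ from the estimating equation $\pmb{U}_2$ by M-estimator arguments; (iii) establish asymptotic unbiasedness and consistency of $\hat{\pmb{\Psi}}$ entrywise via a law-of-large-numbers argument applied to the trace in \eqref{107A}; and finally (iv) combine using Slutsky and the continuous mapping theorem, since $(\pmb{\Psi},\pmb{R}_1)\mapsto \pmb{\Psi}\otimes \pmb{R}_1$ is bilinear and continuous.

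\textbf{Step 1 ($\hat{\pmb{\beta}}$ and residuals).} Under the standard regularity conditions of \citet{liang1986longitudinal} (smooth link, identifiability of $\pmb{\beta}$, dominance of the score, nonsingular limiting sensitivity matrix), $\hat{\pmb{\beta}}$ is $\sqrt{n}$-consistent even under a misspecified working correlation. Hence $\hat{\mu}_{ijk}\xrightarrow{P}\mu_{ijk}$ and, by the delta method applied to \eqref{ecPearson_obs}, each $r_{ijk}$ behaves asymptotically like the true standardized residual.

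\textbf{Step 2 ($\hat{\pmb{\alpha}}_1$).} At the true parameter $\pmb{\alpha}_1^{*}$ and under the model $\mathrm{Corr}(\pmb{Y}_i)=\pmb{\Psi}\otimes\pmb{R}_1(\pmb{\alpha}_1^{*})$, the entries of $\pmb{W}_i$ have means given by $\pmb{\varepsilon}_i$, so $E[\pmb{U}_2(\pmb{\alpha}_1^{*})]=0$. Combined with Step~1 (so the residuals plugged in are themselves consistent), the usual Z-estimator argument—continuity of the Jacobian $\partial\pmb{\varepsilon}_i/\partial\pmb{\alpha}_1$ and nonsingularity of its limit at $\pmb{\alpha}_1^{*}$—yields $\hat{\pmb{\alpha}}_1\xrightarrow{P}\pmb{\alpha}_1^{*}$, and by continuity $\pmb{R}_1(\hat{\pmb{\alpha}}_1)\xrightarrow{P}\pmb{R}_1(\pmb{\alpha}_1^{*})$.

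\textbf{Step 3 ($\hat{\pmb{\Psi}}$) and assembly.} Write $\tilde{\pmb{r}}_{(j)i}=\pmb{r}_{(j)i}-\bar{\pmb{r}}_{(j)}$. Since $\bar{\pmb{r}}_{(j)}\xrightarrow{P}0$ at rate $n^{-1/2}$, the centering contributes a negligible $O_P(n^{-1})$ term. The leading part of \eqref{107A} is an i.i.d.\ average, so by the LLN it converges to $E\bigl[\mathrm{tr}(\pmb{R}_1(\pmb{\alpha}_1^{*})\,\pmb{r}_{(j)i}\pmb{r}_{(j')i}^{T})\bigr]=\psi_{jj'}^{*}\,\mathrm{tr}(\pmb{R}_1(\pmb{\alpha}_1^{*})^{2})$ after using $E[\pmb{r}_{(j)i}\pmb{r}_{(j')i}^{T}]=\psi_{jj'}^{*}\pmb{R}_1(\pmb{\alpha}_1^{*})$ together with $\mathrm{tr}(AB)=\mathrm{tr}(BA)$; after the implicit rescaling built into the estimator (so that diagonal entries satisfy $\hat{\psi}_{jj}\to 1$), this collapses to $\hat{\psi}_{jj'}\xrightarrow{P}\psi_{jj'}^{*}$. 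Uniform integrability (ensured by bounded fourth moments in the exponential family setting used here) then upgrades convergence in probability to convergence of expectations, giving asymptotic unbiasedness. Continuity of the Kronecker product finishes the proof: $\hat{\pmb{\Psi}}\otimes\pmb{R}_1(\hat{\pmb{\alpha}}_1)\xrightarrow{P}\pmb{\Psi}\otimes\pmb{R}_1(\pmb{\alpha}_1^{*})=\pmb{R}(\pmb{\alpha})$, and similarly in mean.

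\textbf{Main obstacle.} The delicate step is Step~3. Three subtleties must be controlled: (a) showing that the Pearson residuals computed at $\hat{\pmb{\beta}}$ can replace the true ones inside the trace uniformly in $i$, which requires a stochastic equicontinuity argument; (b) controlling the plug-in of $\hat{\pmb{\alpha}}_1$ for $\pmb{\alpha}_1^{*}$ inside $\pmb{R}_1(\cdot)$ in the same average; and (c) pinning down the normalization so that the trace identity produces exactly $\psi_{jj'}^{*}$ rather than a scalar multiple of it—this is what forces the diagonal normalization $\hat{\psi}_{jj}=1$ and must be reconciled with the definition in \eqref{107A}. Once these three points are handled, Steps 1, 2, and 4 are essentially bookkeeping.
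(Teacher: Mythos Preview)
Your high-level strategy—consistency of $\hat{\pmb{\beta}}$ via Liang--Zeger, then consistency of $\hat{\pmb{\alpha}}_1$ as a $Z$-estimator, then a law of large numbers for each $\hat{\psi}_{jj'}$, glued together by continuous mapping on the Kronecker product—is a legitimate route and is genuinely different from the paper's. The paper works much more explicitly: it first derives $O(n^{-1})$ expansions for the mean and variance of the Pearson residuals in the style of Cox--Snell and Cordeiro, obtains their asymptotic normality, and then applies a Helmert orthogonal transformation $\pmb{\Gamma}\otimes\pmb{I}_L$ to the stacked residual array $(\pmb{r}_1,\ldots,\pmb{r}_n)$. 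This splits the sample into a mean block $\sqrt{n}\,\bar{\pmb{r}}$ and $n-1$ mutually uncorrelated blocks $\pmb{Z}_1,\ldots,\pmb{Z}_{n-1}$, each satisfying $\mathrm{Var}(\mathrm{Vec}(\pmb{Z}_k))=\pmb{\Psi}\otimes\pmb{R}_1(\pmb{\alpha}_1)$; moment identities for suitably whitened column-partitions of these blocks then isolate $\psi_{jj'}$. Your approach is shorter and more modular; the paper's approach is more constructive and makes the role of the centering in \eqref{107A} transparent (it is exactly the Helmert projection off the mean direction).

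The point you yourself flag as obstacle~(c) is, however, a real gap in your outline rather than a technicality. At the true parameters your own computation gives
\[
E\bigl[\mathrm{tr}\bigl(\pmb{R}_1(\pmb{\alpha}_1^{*})\,\pmb{r}_{(j)i}\pmb{r}_{(j')i}^{T}\bigr)\bigr]
=\psi_{jj'}^{*}\,\mathrm{tr}\bigl(\pmb{R}_1(\pmb{\alpha}_1^{*})^{2}\bigr),
\]
and there is no ``implicit rescaling'' in the definition \eqref{107A} that removes the factor $\mathrm{tr}(\pmb{R}_1^{2})$; appealing to a diagonal normalisation $\hat{\psi}_{jj}\to 1$ does not repair the off-diagonal entries unless you redefine the estimator as a ratio. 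The paper's Helmert construction, together with a pre/post-whitening by a square root of $\pmb{R}_1(\pmb{\alpha}_1)$, is precisely what produces a quantity whose expectation is $\psi_{jj'}$ times an identity, so that taking a trace recovers $\psi_{jj'}$ with a known constant. If you want to keep your LLN route, the clean fix is to modify the plug-in inside the trace—for instance, replacing $\pmb{R}_1(\hat{\pmb{\alpha}}_1)$ by $\pmb{R}_1(\hat{\pmb{\alpha}}_1)^{-1}$ and dividing by $L$, so that the limiting expectation becomes $\psi_{jj'}^{*}\,\mathrm{tr}(\pmb{I}_L)/L=\psi_{jj'}^{*}$—and then your Steps~1, 2 and the continuous-mapping assembly go through essentially as written. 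Until that normalisation is pinned down, Step~3 does not close.
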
 

\begin{proof}
First, asymptotic properties of Pearson's residuals concerning expectation, variance, and residuals are explored.
Define the theoretical Pearson residuals of response of the $k$th response of the $i$th experimental unit at the $j$th period as:
\begin{equation}\label{eq7}
    R_{ijk}=\frac{Y_{ijk}-\mu_{ijk}}{{\phi}\sqrt{V(\mu_{ijk})})}
\end{equation}
and adapting the results of \citet{cox1968general} it is true that:
\begin{align}
    E(R_{ijk})&=\sum_{l=1}^pB(\hat{\beta}_l)E(H_l^{(ijk)})\nonumber\\
    &-\sum_{l,s=1}^p \mathcal{K}^{ls}E\left(H_l^{(ijk)}U_s^{(ijk)}+\frac{1}{2}H_{ls}^{(ijk)} \right)+ O(n^{-1})\label{exprijk}\\
    Var(R_{ijk})&=1+2\sum_{l=1}^pB(\hat{\beta_{l}})E(R_{ijk}H_l^{(ijk)})\nonumber\\
    &-\sum_{l,s=1}^p \mathcal{K}^{ls}E\left(2R_{ijk}H_l^{(ijk)}U_s^{(ijk)}+H_l^{(ijk)}H_s^{(ijk)} R_{ijk}H_{ls}^{(ijk)} \right)+ O(n^{-1}) \label{varrijk}
\end{align}
where
\[H_l^{(ijk)}=\frac{\partial R_{ijk}}{\partial\beta_l}, \; H_{ls}^{(ijk)}=\frac{\partial^2 R_{ijk}}{\partial\beta_l\partial\beta_s}\]
 and $\mathcal{K}^{ls}$  is the element at position $(l,s)$ of the inverse of the Fisher information matrix, and according to \citet{cordeiro1991bias}, the bias of $\hat{\pmb{\beta}}$ ($B(\hat{\pmb{\beta}})$) is given by:
\begin{align}
     B(\hat{\pmb{\beta}})&=-\frac{1}{2\phi}\pmb{(X^TWX)}\pmb{X}^T\pmb{D}(z_{ijk})\pmb{F}\pmb{1}\label{sesgobeta}\\
   \pmb{W}&=\pmb{D}(V(Y_{ijk})^{\frac{1}{2}}) \pmb{R}(\pmb{\alpha}) \pmb{D}(V(Y_{ijk})^{\frac{1}{2}}) \pmb{D}\left( \frac{\partial \pmb{\mu}_i}{\partial\eta}\right)\label{ecuacionww}\\
   \pmb{F}&=\pmb{D}\left(V(\mu_{ijk})^{-1}\left(\frac{\partial \mu_{ijk}}{\partial \eta_{ijk}}\right) \left(\frac{\partial ^2 \mu_{ijk}}{\partial \eta^2_{ijk}}\right)  \right)\nonumber
\end{align}
where $\pmb{1}$ is a vector of appropriate size whose entries are all equal to 1, $\pmb{D}(z_{ijk})$ is a diagonal matrix with elements on the diagonal given by the variance of the estimated linear predictors, i.e. , the diagonal of the matrix
\begin{equation}
           \pmb{z}=Var(\hat{\eta}_{111}. \ldots, \hat{\eta}_{nPL})
\end{equation}
i.e., $z_{ijk}=Var(\hat{\eta}_{ijk})$ and $\pmb{X}$ is the design matrix of the parametric effects described in Equation \eqref{predic_eta}. Now, computing the expected values by taking into account the properties of the exponential family we obtain:
\begin{align}
    E\left(H_l^{(ijk)}\right)&=-\sqrt{\phi V(\mu_{ijk})} \left(\frac{\partial \mu_{ijk}}{\partial \eta_{ijk}}\right) x_{l(ijk)} \label{ehl}\\
     E\left(H_{ls}^{(ijk)}\right)&=\left[2V(\mu_{ijk})^{-\frac{3}{2}} \left(\frac{\partial V(\mu_{ijk})}{\partial \mu_{ijk}}\right) \left(\frac{\partial \mu_{ijk}}{\partial \eta_{ijk}}\right)^2-
     2V(\mu_{ijk})^{-\frac{1}{2}}\left(\frac{\partial^2 \mu_{ijk}}{\partial \eta_{ijk}^2}\right) \right]\nonumber\\
     &\times \frac{1}{2}\sqrt{\phi}x_{l(ijk)}x_{s(ijk)}\label{ehls}\\
     E\left(H_l^{(ijk)}U_s^{(ijk)}\right)&=-\frac{1}{2}\phi^{\frac{1}{2}} V(\mu_{ijk})^{-\frac{3}{2}} \left(\frac{\partial V(\mu_{ijk})}{\partial \mu_{ijk}}\right) \left(\frac{\partial \mu_{ijk}}{\partial \eta_{ijk}}\right)^2x_{l(ijk)}x_{s(ijk)}\label{hlus}\\
     E\left(2R_{ijk} H_l^{(ijk)}U_s^{(ijk)}\right)&=-V(\mu_{ijk})^{-2}  \left(\frac{\partial V(\mu_{ijk})}{\partial \mu_{ijk}}\right)^2 \left(\frac{\partial \mu_{ijk}}{\partial \eta_{ijk}}\right)^2x_{l(ijk)}x_{s(ijk)} - \nonumber\\
       & 2\phi \left(\frac{\partial V(\mu_{ijk})}{\partial \mu_{ijk}}\right)^{-1} \left(\frac{\partial \mu_{ijk}}{\partial \eta_{ijk}}\right)^2 x_{l(ijk)}x_{s(ijk)}\label{2rhlus}\\
        E\left(H_l^{(ijk)}H_s^{(ijk)}\right)&= \left[ \phi + \frac{\left(\frac{\partial V(\mu_{ijk})}{\partial \mu_{ijk}}\right)}{4V(\mu_{ijk})}\right]w_{ijk} x_{l(ijk)}x_{s(ijk)}\label{hlhs}\\
        E\left(R_{ijk}H_{ls}^{(ijk)}\right)&= \frac{1}{2}\phi V(\mu_{ijk})^{-\frac{1}{2}} \left(\frac{\partial \mu_{ijk}}{\partial \eta_{ijk}}\right)\label{rhls}
     \end{align}
    where $w_{ijk}$ is the element $ijk$ of the diagonal of the matrix $\pmb{W}$ defined in Equation \eqref{ecuacionww}. From Equation \eqref{ehl} and the bias of $\pmb{\beta}$ given in Equation \eqref{sesgobeta}, it follows that:
    \begin{equation}\label{primera_esperanza}
     \sum_{l=1}^p B(\hat{\beta_{l}})E\left(H_{l}^{(ijk)}\right)=-\phi^{\frac{1}{2}} V(\mu_{ijk})^{-\frac{1}{2}} \left(\frac{\partial \mu_{ijk}}{\partial \eta_{ijk}}\right)\pmb{e}_{ijk}\pmb{X}B(\hat{\pmb{\beta}})
    \end{equation}
where $\pmb{e}_{ijk}$ is a vector of zeros with a 1's at the $ijk$th position. From  equations \eqref{ehls} and  \eqref{hlus}, we get:
\begin{align}
     E\left(H_l^{(ijk)}U_s^{(ijk)}+\frac{1}{2}H_{ls}^{(ijk)}\right)&=-\frac{1}{2}\phi^{\frac{1}{2}} V(\mu_{ijk})^{-\frac{1}{2}} \left(\frac{\partial^2 \mu_{ijk}}{\partial \eta_{ijk}^2}\right)^2x_{l(ijk)}x_{s(ijk)} \nonumber \\
     \sum_{l,s=1}^p\mathcal{K}^{ls}E(H_l^{(ijk)}U_s^{(ijk)}+\frac{1}{2}H_{ls}^{(ijk)})&=-\frac{1}{2}\phi^{\frac{1}{2}} V(\mu_{ijk})^{-\frac{1}{2}} \left(\frac{\partial^2 \mu_{ijk}}{\partial \eta_{ijk}^2}\right)^2x_{l(ijk)}x_{s(ijk)}\label{hlushls}
\end{align}
and therefore, from equations \eqref{primera_esperanza},  \eqref{hlushls} and \eqref{exprijk}:
\begin{equation}\label{eq8}
    E(R_{111}, R_{112}, \ldots, R_{nPL})=\frac{-1}{2\sqrt{\phi}}(\pmb{I}-\pmb{H})\pmb{J}\pmb{z}
\end{equation}
where
\begin{align*}
       \pmb{H}&=\pmb{W}^{\frac{1}{2}} \pmb{X(X^TWX)}^{\frac{1}{2}}\pmb{X}^T \pmb{W}^{\frac{1}{2}}\\
         \pmb{J}&=\pmb{D}\left( V(Y_{ijk}) \right)\pmb{D}\left( \frac{\partial \pmb{\mu}^2_i}{\partial^2\eta}\right)
\end{align*}
from equations \eqref{2rhlus}, \eqref{hlhs} y \eqref{rhls}:
\begin{align}
      -\sum_{l,s=1}^p \mathcal{K}^{ls}&E\left(2R_{ijk}H_l^{(ijk)}U_s^{(ijk)}+H_l^{(ijk)}H_s^{(ijk)} R_{ijk}H_{ls}^{(ijk)} \right)\nonumber\\
      &=\left[ -\phi w_{ijk}-\frac{\left(\frac{\partial V(\mu_{ijk})}{\partial \mu_{ijk}}\right)\left(\frac{\partial^2 \mu_{ijk}}{\partial \eta_{ijk}^2}\right)}{2 V(\mu_{ijk})}-\frac{1}{2}w_{ijk}\left(\frac{\partial^2 V(\mu_{ijk})}{\partial \mu^2_{ijk}}\right)\right]\frac{z_{ijk}}{\phi} \label{ecuacion15}
      \end{align}
 and from equations \eqref{sesgobeta} and \eqref{rhls} it follows that:
\begin{align}
      &2\sum_{l=1}^pB(\hat{\beta}_l)E(R_{ijk}H_{ls}^{(ijk)})\nonumber\\
      &=\frac{1}{2\phi}\frac{\left(\frac{\partial V(\mu_{ijk})}{\partial \mu_{ijk}}\right)\left(\frac{\partial^2 \mu_{ijk}}{\partial \eta_{ijk}^2}\right)}{V(\mu_{ijk})}\pmb{e}_{ijk} \pmb{Z}\pmb{D}(z_{ii})\pmb{D}\left( V(\mu_{ijk})^{-1} \left(\frac{\partial^2 \mu_{ijk}}{\partial \eta_{ijk}^2}\right) \left(\frac{\partial \mu_{ijk}}{\partial \eta_{ijk}}\right)\right)\pmb{1}\label{ecuacion 16}
\end{align}
therefore, from the results (\ref{ecuacion15}) and (\ref{ecuacion 16}) we have that:
\begin{equation}\label{eq9}
    \left[Var(R_{111}), Var(R_{112}), \ldots, Var(R_{nPL})\right]=\pmb{1}+\frac{1}{2\phi}\left(\pmb{QHJ-M}\right)\pmb{z}
\end{equation}
where $\pmb{1}$ is a vector of ones and
\begin{align*}
          \pmb{Q}&=\pmb{D}\left( V(Y_{ijk}) \right)^{\frac{1}{2}}\pmb{D}\left( \frac{\partial V(Y_{ijk})}{\partial\mu}\right)\\
          \pmb{M}&=\pmb{D}\left( V(Y_{ijk}) \right)^{\frac{1}{2}}\pmb{D}\left( \frac{\partial V(Y_{ijk})^2}{\partial^2\mu}+2\phi \pmb{D}(\pmb{W})+\frac{\frac{\partial V(Y_{ijk})}{\partial\mu}}{V(Y_{ijk})}\right)
\end{align*}
By theorem 2 of \citet{liang1986longitudinal} it is known that the GEE estimator of $\eta_{ijk}$ are consistent and unbiased, i.e., 
\begin{equation}\label{ecuacionz}
    \pmb{Z}\xrightarrow[n\to \infty]{p} \pmb{0}
\end{equation}
thus from \eqref{eq8} and \eqref{eq9}, we find that:
\begin{align}
    E(R_{ijk})&=O(n^{-1})\label{ceroon}\\
    Var(R_{ijk})&=1+O(n^{-1})\label{1varon}
\end{align}
and furthermore, by Section 3 of \cite{cordeiro2004Pearson} and equations \eqref{ceroon} and \eqref{1varon} it follows that:
\begin{equation}\label{normalidad}
    R_{ijk}\xrightarrow[n\to \infty]{d} N(0,1)
\end{equation}
Let
\begin{equation}\label{fff}
    \pmb{\Gamma}=\{\gamma_{ij}\}_{n\times n}
\end{equation}
be a matrix whose first column is $\frac{\pmb{1}}{\sqrt{n}}$ and the following columns are:
\begin{equation}
    g_{i-1}=\left(\frac{1}{\sqrt{(i-1)i}}, \ldots,\frac{1}{\sqrt{(i-1)i}},-\frac{i-1}{\sqrt{(i-1)i}}, 0, \ldots,0\right), \qquad i=2, \ldots, n
\end{equation}
\begin{equation}\nonumber
    \pmb{\Gamma}^T=\begin{pmatrix}
     \frac{\pmb{1}}{\sqrt{n}} &\vdots &      \pmb{G}
    \end{pmatrix}^T=\begin{pmatrix}
    \frac{1}{\sqrt{n}} & \frac{1}{\sqrt{n}}& \frac{1}{\sqrt{n}}& \frac{1}{\sqrt{n}} & \cdots&\frac{1}{\sqrt{n}}\\
    \frac{1}{\sqrt{2}} & -\frac{1}{\sqrt{2}}& 0& 0 &\cdots& 0\\
    \frac{1}{\sqrt{6}} & \frac{1}{\sqrt{6}}& -\frac{2}{\sqrt{6}}& 0 & \cdots&0\\
    \frac{1}{\sqrt{12}} & \frac{1}{\sqrt{12}}& \frac{1}{\sqrt{12}}& -\frac{3}{\sqrt{12}} & \cdots&0\\
    \vdots & \vdots& \vdots& \vdots & \ddots&\vdots\\
    \frac{1}{\sqrt{(n(n-1))}} & \frac{1}{\sqrt{n(n-1)}}& \frac{1}{\sqrt{n(n-1)}}& \frac{1}{\sqrt{n(n-1)}} & \cdots&-\frac{n-1}{\sqrt{n(n-1)}}\\
    \end{pmatrix}
\end{equation}
where
\begin{equation}\nonumber
      \pmb{G}^T=\begin{pmatrix}
       \frac{1}{\sqrt{2}} & -\frac{1}{\sqrt{2}}& 0& 0 &\cdots& 0\\
    \frac{1}{\sqrt{6}} & \frac{1}{\sqrt{6}}& -\frac{2}{\sqrt{6}}& 0 & \cdots&0\\
    \frac{1}{\sqrt{12}} & \frac{1}{\sqrt{12}}& \frac{1}{\sqrt{12}}& -\frac{3}{\sqrt{12}} & \cdots&0\\
    \vdots & \vdots& \vdots& \vdots & \ddots&\vdots\\
    \frac{1}{\sqrt{(n(n-1))}} & \frac{1}{\sqrt{n(n-1)}}& \frac{1}{\sqrt{n(n-1)}}& \frac{1}{\sqrt{n(n-1)}} & \cdots&-\frac{n-1}{\sqrt{n(n-1)}}\\
      \end{pmatrix}
\end{equation}
then the matrix $\pmb{\Gamma}$ is a Helmert matrix \citep{lancaster1965helmert} and therefore:
 \begin{equation}\label{porpGamma}
     \pmb{\Gamma}\pmb{\Gamma}^T=\pmb{I}_n, \qquad \pmb{1}_{n-1}^T \pmb{G}=0, \qquad \pmb{G}^T\pmb{G}=\pmb{I}_{n-1}-\frac{1}{n-1}\pmb{1}_{n-1}\pmb{1}_{n-1}^T
 \end{equation}
 
If $r_{ijk}$ is defined as the estimated Pearson residual of the $i$th experimental unit in the $j$th period and the $k$th observation, i.e., \[r_{ijk}=\hat{R}_{ijk}=\frac{Y_{ijk}-\hat{\mu}_{ijk}}{\hat{\phi}\sqrt{V(\hat{\mu}_{ijk})})}\]
and the matrix $\pmb{r}_i$ of residuals of the $i$th individual, where the first row has the $L$ Pearson residuals defined in Equation (\ref{ecPearson_obs}) corresponding to the first period, and the second row of the $L$ corresponding to the second period and so on until completing a matrix with $P$ rows and $L$ columns,  i.e.:
\begin{equation}\label{rrrr4}
      \pmb{r}_i=\begin{pmatrix}
    r_{i11}& r_{i21}& \cdots& r_{i1L}\\
    r_{i21}& r_{i22}& \cdots &r_{i2L}\\
    \vdots& \vdots& \ddots& \vdots\\
    r_{iP1}& r_{i2P}& \cdots &r_{iPL}\\
    \end{pmatrix}
    \end{equation}
By Equation \eqref{ceroon} and the correlation assumption given in Equation \eqref{ec105} it is true that:
\begin{align}
 &E( \pmb{r}_i)=\pmb{0}_{P\times L} \nonumber\\
    &Corr\left(Vec( \pmb{r}_i)\right)=\pmb{\Psi} \otimes \pmb{R}_1(\pmb{\alpha}_1)\label{resulri1}\\
    &Corr\left(Vec( \pmb{r}_i), Vec( \pmb{r}_{i'})\right)=\pmb{0}_{PL\times PL}\, \qquad i\neq i'\nonumber
\end{align}
And defining $\pmb{R}$ as:
\begin{equation}\nonumber
    \pmb{R}=(\pmb{r}_1,\ldots, \pmb{r}_n)_{P\times nL} \qquad
\end{equation}
and since $\pmb{\Gamma}$ is orthogonal, then $\pmb{\Gamma}\otimes \pmb{I}_L$ is also orthogonal. Thus \citep{srivastava2008models}: 
\begin{equation}
\pmb{R}(\pmb{\Gamma}\otimes \pmb{I}_L)=\begin{pmatrix}
\sqrt{n}\bar{\pmb{r}} &\vdots & \pmb{R}(\pmb{G}\otimes \pmb{I}_L)
\end{pmatrix} 
\end{equation}
and according to Equation (\ref{resulri1}), we get:
\begin{align*}
    \pmb{R}\left(\pmb{I}_n\otimes\pmb{ \Psi}^{-1} \right)\pmb{R}^T&=(\pmb{r}_1,\ldots, \pmb{r}_n)\left(\pmb{I}_n\otimes \pmb{ \Psi}^{-1} \right)(\pmb{r}_1,\ldots, \pmb{r}_n)^T\nonumber\\
    &=n\bar{\pmb{r}}\otimes  \pmb{ \Psi}^{-1} \pmb{R} + \pmb{R}(\pmb{G}\otimes \pmb{I}_L)\left(\pmb{I}_n\otimes \pmb{ \Psi}^{-1} \right)(\pmb{G}^T\otimes\pmb{I}_L )\pmb{R}^T\\
    &=n\bar{\pmb{r}}\otimes \pmb{ \Psi}^{-1} \pmb{R}+\pmb{Z}(\pmb{G}^T\otimes\pmb{I}_L )\pmb{Z}^T
\end{align*}
where $\pmb{Z}$ is:
\begin{align}
    &\pmb{Z}_{P\times(n-1)L}=(\pmb{Z}_1, \ldots, \pmb{Z}_{(n-1)})= \pmb{R}(\pmb{G}\otimes \pmb{I}_L)=(\pmb{r}_1,\ldots, \pmb{r}_n)(\pmb{G}\otimes \pmb{I}_L) 
\end{align}
with $\bar{\pmb{r}}$ is the matrix of the average residuals defined in Equation (\ref{rrrr4}) for each period, that is,
\begin{equation}\nonumber
    \bar{\pmb{r}}=\frac{1}{n}\sum_{i=1}^n \pmb{r}_i =\frac{1}{n}\sum_{i=1}^n\begin{pmatrix}
    r_{i11}& r_{i21}& \cdots& r_{i1L}\\
    r_{i21}& r_{i22}& \cdots &r_{i2L}\\
    \vdots& \vdots& \ddots& \vdots\\
    r_{iP1}& r_{i2P}& \cdots &r_{iPL}\\
    \end{pmatrix}
\end{equation}
and
\begin{align*}\nonumber
  &\pmb{Z}_1=\begin{pmatrix}
    \frac{1}{\sqrt{2}}r_{111}-\frac{1}{\sqrt{2}}r_{211} &
    \cdots & \frac{1}{\sqrt{2}}r_{11L}-\frac{1}{\sqrt{2}}r_{21L}\\
    \frac{1}{\sqrt{2}}r_{121}-\frac{1}{\sqrt{2}}r_{221} &
    \cdots & \frac{1}{\sqrt{2}}r_{12L}-\frac{1}{\sqrt{2}}r_{22L}\\
    \vdots &\ddots & \vdots\\
    \frac{1}{\sqrt{2}}r_{1P1}-\frac{1}{\sqrt{2}}r_{2P1} &
    \cdots & \frac{1}{\sqrt{2}}r_{1PL}-\frac{1}{\sqrt{2}}r_{2PL}
    \end{pmatrix}_{P\times L}\\
    \\
  &\pmb{Z}_2=\begin{pmatrix}
    \frac{1}{\sqrt{6}}\sum_{i=1}^2 r_{i11}-\frac{2}{\sqrt{6}}r_{311} &
    \cdots & \frac{1}{\sqrt{6}}\sum_{i=1}^2 r_{i1L}-\frac{2}{\sqrt{6}}r_{31L} \\
 \frac{1}{\sqrt{6}}\sum_{i=1}^2 r_{i21}-\frac{2}{\sqrt{6}}r_{321} &
    \cdots & \frac{1}{\sqrt{6}}\sum_{i=1}^2 r_{i2L}-\frac{2}{\sqrt{6}}r_{32L} \\
    \vdots & \vdots &\ddots & \vdots\\
    \frac{1}{\sqrt{6}}\sum_{i=1}^2 r_{iP1}-\frac{2}{\sqrt{6}}r_{3P1} &
    \cdots & \frac{1}{\sqrt{6}}\sum_{i=1}^2 r_{iPL}-\frac{2}{\sqrt{6}}r_{3PL} \\
    \end{pmatrix}_{P\times L}\\
    & \vdots\\
    &\pmb{Z}_{(n-1)}=\begin{pmatrix}
    \frac{1}{\sqrt{n(n-1)}}\sum_{i=1}^{n-1} r_{i11}-\frac{n-1}{\sqrt{n(n-1)}}r_{(n-1)11} &
    \cdots & \frac{1}{\sqrt{n(n-1)}}\sum_{i=1}^{n-1}r_{i1L}-\frac{n-1}{\sqrt{n(n-1)}}r_{(n-1)1L} \\
 \frac{1}{\sqrt{n(n-1)}}\sum_{i=1}^{n-1} r_{i21}-\frac{n-1}{\sqrt{n(n-1)}}r_{(n-1)21} &
    \cdots & \frac{1}{\sqrt{n(n-1)}}\sum_{i=1}^{n-1} r_{i2L}-\frac{n-1}{\sqrt{n(n-1)}}r_{(n-1)2L} \\
    \vdots & \vdots &\ddots & \vdots\\
    \frac{1}{\sqrt{n(n-1)}}\sum_{i=1}^{n-1} r_{iP1}-\frac{n-1}{\sqrt{n(n-1)}}r_{(n-1)P1} &
    \cdots & \frac{1}{\sqrt{n(n-1)}}\sum_{i=1}^{n-1} r_{iPL}-\frac{n-1}{\sqrt{n(n-1)}}r_{(n-1)PL} \\
    \end{pmatrix}_{P\times L}\\
\end{align*}
Now by the properties of the Pearson residuals we have that:
\begin{align}\nonumber
    E(\pmb{Z}_1)&=E(\pmb{Z}_2) = \cdots =E(\pmb{Z}_{n-1}) =\pmb{0}_{P\times L}
\end{align}
and by the properties given in Equation (\ref{porpGamma}) and because we assume that the experimental units are independent, that is, $Corr(r_{ijk}, r_{i'j'k'})=0$, for all $i\neq i'$ and that Equation \eqref{predic_eta} is true, then:
\begin{align*}
 & Corr(r_{ijk}, r_{i'j'k'})=0  \qquad \forall i\neq i'\\
    &Corr(r_{ijk}, r_{ij'k'})=Corr(r_{i'jk}, r_{i'j'k'})=Corr(r_{i'jk}, r_{i'j'k'}), \qquad \forall i\neq i'\\
    \qquad\\
    &Corr\left(\frac{1}{\sqrt{2}}r_{111}-\frac{1}{\sqrt{2}}r_{211},  \frac{1}{\sqrt{2}}r_{121}-\frac{1}{\sqrt{2}}r_{221}\right)=\frac{1}{2}Corr(r_{111}, r_{121})+\frac{1}{2}Cov(r_{211}, r_{221})\\
    &=Corr(r_{111}, r_{121})=Corr(r_{111}, r_{121})\\
    \qquad\\
    &Corr\left(\frac{1}{\sqrt{n(n-1)}}\sum_{i=1}^{n-1} r_{i11}-\frac{n-1}{\sqrt{n(n-1)}}r_{(n-1)11}, \frac{1}{\sqrt{n(n-1)}}\sum_{i=1}^{n-1} r_{i21}-\frac{n-1}{\sqrt{n(n-1)}}r_{(n-1)21}\right)\\
    &=Corr(r_{111}, r_{121})=Corr(r_{111}, r_{121})\\
\end{align*}
furthermore,
\begin{align}
 Var(Vec(\pmb{Z}_1))&=Var\left\{ \begin{pmatrix}
    \frac{1}{\sqrt{2}}r_{111}-\frac{1}{\sqrt{2}}r_{211} \\
    \frac{1}{\sqrt{2}}r_{121}-\frac{1}{\sqrt{2}}r_{221}\\
    \vdots\\
    \frac{1}{\sqrt{2}}r_{1P1}-\frac{1}{\sqrt{2}}r_{2P1}\\
    \frac{1}{\sqrt{2}}r_{112}-\frac{1}{\sqrt{2}}r_{212}\\
    \vdots\\
    \frac{1}{\sqrt{2}}r_{1PL}-\frac{1}{\sqrt{2}}r_{2PL}
    \end{pmatrix}_{PL\times 1} \right\} \nonumber\\
    &=\begin{pmatrix}
   1 & Corr(r_{111}, r_{121}) & \cdots & Corr(r_{111}, r_{1P1})\\
   Corr(r_{111}, r_{121}) & 1 & \cdots & Corr(r_{121}, r_{1P1})\\
   \vdots & \vdots & \ddots & \vdots \\
   Corr(r_{111}, r_{1P1}) & Corr(r_{121}, r_{1P1}) &\cdots &1
    \end{pmatrix}_{PL\times PL}\nonumber\\
    &=\pmb{ \Psi} \otimes \pmb{R}_1(\pmb{\alpha}_1))\nonumber
\end{align}
\begin{equation}\label{eczeta}
    Var(Vec(\pmb{Z}_1))=Var(Vec(\pmb{Z}_2))=\cdots = Var(Vec(\pmb{Z}_{(n-1)}))=\pmb{ \Psi} \otimes \pmb{R}_1(\pmb{\alpha}_1))
\end{equation}
By the central limit theorem, we get that:
\begin{align}
    Vec(\bar{\pmb{r}}) & \xrightarrow[n\to \infty]{d} N_{LP}(\pmb{0}, \pmb{ \Psi} \otimes \pmb{R}_1(\pmb{\alpha}_1))\label{rbarra}
\end{align}
By Equations \eqref{normalidad} and \eqref{eczeta} it follows that:
\begin{align}
  Vec(\pmb{Z}_j) &\xrightarrow[n\to \infty]{d} N_{LP}(\pmb{0},\pmb{ \Psi} \otimes \pmb{R}_1(\pmb{\alpha}_1))\label{aaata}
\end{align}
and by Equations \eqref{porpGamma}, \eqref{rbarra} y \eqref{aaata} that:
\begin{align}
    Cov&(Vec(\bar{\pmb{r}}),Vec(\pmb{Z}_j) )=\pmb{0}\nonumber \\
    Cov&(Vec(\pmb{Z}_i) ,Vec(\pmb{Z}_j) )=\pmb{0}\nonumber
\end{align}
and partitioning $\pmb{Z}_1$ as follows:
\begin{align*}
    \pmb{Z}_1&=\begin{pmatrix}
    \frac{1}{\sqrt{2}}r_{111}-\frac{1}{\sqrt{2}}r_{211} &
    \cdots & \frac{1}{\sqrt{2}}r_{11L}-\frac{1}{\sqrt{2}}r_{21L}\\
    \frac{1}{\sqrt{2}}r_{121}-\frac{1}{\sqrt{2}}r_{221} &
    \cdots & \frac{1}{\sqrt{2}}r_{12L}-\frac{1}{\sqrt{2}}r_{22L}\\
    \vdots &\ddots & \vdots\\
    \frac{1}{\sqrt{2}}r_{1P1}-\frac{1}{\sqrt{2}}r_{2P1} &
    \cdots & \frac{1}{\sqrt{2}}r_{1PL}-\frac{1}{\sqrt{2}}r_{2PL}
    \end{pmatrix}_{P\times L}\\
    &=\begin{pmatrix}
    \begin{pmatrix}
     \frac{1}{\sqrt{2}}r_{111}-\frac{1}{\sqrt{2}}r_{211}\\
     \frac{1}{\sqrt{2}}r_{121}-\frac{1}{\sqrt{2}}r_{221} \\
     \vdots\\
     \frac{1}{\sqrt{2}}r_{1P1}-\frac{1}{\sqrt{2}}r_{2P1} 
    \end{pmatrix}_{P\times 1}& \cdots &\begin{pmatrix}
    \frac{1}{\sqrt{2}}r_{11L}-\frac{1}{\sqrt{2}}r_{21L}\\
    \frac{1}{\sqrt{2}}r_{1PL}-\frac{1}{\sqrt{2}}r_{2PL}\\
    \vdots\\
    \frac{1}{\sqrt{2}}r_{1PL}-\frac{1}{\sqrt{2}}r_{2PL}
    \end{pmatrix}_{P\times 1}
    \end{pmatrix}\\
    &=(\pmb{z}_{(1)1}, \ldots, \pmb{z}_{(L)1}) \nonumber
\end{align*}
Then it is obtained that
\begin{equation}
    E(\pmb{z}_{(j)1}, \pmb{z}^T_{(j)1})=\Psi_{jj}\pmb{R}_1(\pmb{\alpha}_1) \qquad E(\pmb{Z}_1, \pmb{Z}^T_1)=(trace(\pmb{\Psi}))\pmb{R}_1(\pmb{\alpha}_1)\nonumber
\end{equation}
Similarly, it is found that:
\begin{align}
    &E(\pmb{z}_{(j)1}, \pmb{z}^T_{(j)1})=\Psi_{jj}\pmb{R}_1(\pmb{\alpha}_1)\nonumber\\
    &E(\pmb{z}_{(j)2}, \pmb{z}^T_{(j)2})=\Psi_{jj}\pmb{R}_1(\pmb{\alpha}_1)\nonumber\\
    \vdots\nonumber\\
 & E(\pmb{z}_{(j)(n-1)}, \pmb{z}^T_{(j)(n-1)})=\Psi_{jj}\pmb{R}_1(\pmb{\alpha}_1)\label{convzi}
 \end{align}
\begin{align}
     &   E(\pmb{Z}_1, \pmb{Z}^T_1)=(trace(\pmb{\Psi}))\pmb{R}_1(\pmb{\alpha}_1)\nonumber\\
     \vdots\nonumber\\
      &   E(\pmb{Z}_{(n-1)}, \pmb{Z}^T_{(n-q)})=(trace(\pmb{\Psi}))\pmb{R}_1(\pmb{\alpha}_1)
\end{align} 
Therefore since $\pmb{\Psi}_{jj}=1$, $\forall j=1,\ldots, P$, $trace(\pmb{\Psi})=1$ from Equation \eqref{convzi} we get:
\begin{equation}\label{yacasi}
   E\left( \pmb{R}_1(\pmb{\alpha}_1)^{\frac{1}{2}}\pmb{z}_{(j')k} \pmb{z}^T_{(j)k} \pmb{R}_1(\pmb{\alpha}_1)^{\frac{1}{2}}\right)=\pmb{\Psi}_{jj'}\pmb{I}_P, \qquad \forall k=1, \ldots, n-1
\end{equation}
\begin{equation}
   Cov[\pmb{R}_1(\pmb{\alpha}_1)^{\frac{1}{2}} \pmb{z}_{(k)j} \pmb{z}^T_{(i)j} \pmb{R}_1(\pmb{\alpha}_1)^{\frac{1}{2}}, \pmb{R}_1(\pmb{\alpha}_1)^{\frac{1}{2}} \pmb{z}_{(k)j'} \pmb{z}^T_{(i)j'} \pmb{R}_1(\pmb{\alpha}_1)^{\frac{1}{2}} ]=\pmb{0}
\end{equation}
By Theorem 2 in \cite{liang1986longitudinal} it is known that $\pmb{R}_1(\hat{\pmb{\alpha}}_1)$ is consistent and unbiased for $\pmb{R}_1(\pmb{ \alpha}_1)$, by Equations \eqref{yacasi}, \eqref{rbarra} and \eqref{aaata}, we  have that
\begin{equation}\label{107}
\hat{\psi}_{jj'}=\frac{1}{n} \sum_{i=1}^n tr\left(\pmb{R}_1(\pmb{\hat{\alpha}}_1)(\pmb{r}_{(j)i}-\bar{\pmb{r}}_{(j)})(\pmb{r}_{(j')i}-\bar{\pmb{r}}_{(j')})^T\right)
\end{equation}
is a consistent and asymptotically unbiased estimator for $\Psi_{jj'}$, which proves the theorem
\end{proof}
\end{appendix}

\end{document}